\def \cls {.} 
\def \pics {.}      
\DeclareMathOperator{\chull}{Convex\,Hull}
\newcommand{\pref}[1]{Property~\ref{#1}}
\title{\LARGE \bf Obstacle avoidance via B-spline parameterizations of flat trajectories}
\author{Florin Stoican$^\dagger$\thanks{$^\dagger$ Politehnica University of Bucharest, Department of Automation Control and Systems Engineering, Bucharest, Romania {\tt\small florin.stoican@acse.pub.ro, vlad\_mihai.ivanusca@aut.pub.ro}}, Vlad-Mihai Iv\u anu\c sc\u a$^\dagger$, Ionela Prodan$^\ddagger$\thanks{$^\ddagger$, Univ. Grenoble Alpes, Laboratory of Conception and Integration of Systems (LCIS EA 3747), F-26902, Valence, France, {\tt\small ionela.prodan@lcis.grenoble-inp.fr}}}
\begin{document}

\maketitle

\begin{abstract}
This paper considers the collision avoidance problem in a multi-agent multi-obstacle framework. The originality in solving this intensively studied problem resides in the proposed geometrical view combined with differential flatness for trajectory generation and B-splines for the flat output parametrization. Using some important properties of these theoretical tools we show that the constraints can be validated at all times. Exact and sub-optimal constructions of the collision avoidance optimization problem are provided. The results are validated through extensive simulations over standard autonomous aerial vehicle dynamics.
\end{abstract}

\begin{keywords}
Obstacle avoidance, Flat trajectory, B-spline basis, MIP (Mixed-Integer Programming), UAV (Unmanned Aerial Vehicle).
\end{keywords}
\section{Introduction}

One of the main issues in multi-obstacle, multi-agent environments is the collision avoidance assessment. Usually the avoidance constraints have to be considered both between an agent and fixed obstacles and between any two agents. The problem is intensively studied, yet actual, in the literature but is usually tilted towards heuristic approaches or online validations. The first usually lacks stability and performances guarantees and the later is easily boggled into numerical issues \cite{burger2010smooth, aguiar2007trajectory}.

An alternative approach is to solve offline the difficult trajectory generation part of the overall problem, online only a straightforward trajectory tracking is employed. This reduces significantly the online computations and allows stability and performance analysis. The caveat is that the trajectory to be computed has to respect the dynamics of the agent and to validate the collision avoidance constraints at all times \cite{mellinger2012trajectory,prodan2013receding}.

An interesting implementation is represented by flat trajectory design which guarantees that the corresponding system dynamics are respected (with the caveat that state and input constraints are not easily accounted for \cite{de2009flatness, suryawan2010methods}). This construction shifts the state and input constraints into constraints over the flat output. To handle this, usually, the flat output is project over some basis functions which means that only the coefficients of the projections need to be found. In this sense, B-spline functions represent an ideal choice since they have enough flexibility \cite{suryawan2012constrained} and nice theoretical properties (of which we will make extensive use throughout the paper).

The present work builds on results sketched in \cite{med2015} and further advances the topic in several directions. Foremost, we provide exact and sub-optimal formulations of the collision avoidance problems between an agent and the obstacles and between any tho agents. In both cases we make use of the geometrical properties of the B-spline functions which allow to bound locally the trajectories obtained through them. Hence, the collision problems become separation problems between sets of consecutive points. In the exact case, these constraints lead to nonlinear formulation where both the control points and the separation hyperplanes are variables. A simplified (and hence sub-optimal approach) is to select the separation hyperplanes from the support hyperplanes of the obstacles therefore reducing the problem to a mixed integer formulation.

Both methods make use of a multi-obstacole framework and are tested and compared over extensive simulations. The rest of the paper is organized as follows: \secref{sec:pre} tackles flat output and B-spline characterizations, \secref{sec:flat} discusses flat trajectory generation and \secref{sec:main} presents the main results of the paper which are then illustrated in \secref{sec:ill}. \secref{sec:con} draws the conclusions.

\subsection*{Notation}
The Minkowski sum of two sets, $A$ and $B$ is denoted as $ A\oplus B=\left\{x:\: x=a+b,\: a\in A,b\in B\right\}$. $\chull\{p_1\dots p_n\}$ denotes the convex hull of set generated by the collection of points $p_1\dots p_n$.

\section{Preliminaries}
\label{sec:pre}

The problem of designing reference trajectories in a multi-agent multi-obstacle environment is in general a difficult one. A popular approach is to parametrize them through flatness constructions \cite{fliess1995flatness,levine2009analysis,de2009flatness}. In this section we will describe some of the basics of flat trajectory and their parametrization via B-spline basis functions.  

\subsection{Flat trajectories}

A nonlinear time invariant system: 
\be
\label{eq:contSys}
\dot{x}(t)=f(x(t),u(t)),
\ee
where $x(t)\in \re{n}$ is the state vector and $u(t) \in \re{m}$ is the input vector is called differentially flat if there exists the flat output $z(t) \in \re{m}$:
\be
\label{eq:flat_out}
z(t)=\gamma(x(t),u(t),\dot{u}(t),\cdots, u^{(q)}(t))
\ee
such that the states and inputs can be algebraically expressed in terms of $z(t)$ and a finite number of its higher-order derivatives:
\bea
\label{eq:diff_a}
x(t)&=&\Theta(z(t),\dot{z}(t), \cdots ,z^{(q)}(t)),\\ \nonumber
u(t)&=&\Phi(z(t),\dot{z}(t),\cdots ,z^{(q)}(t)).
\eea
\begin{rem}
\label{rem:flatInput}
For any system admitting a flat description, the number of flat outputs equals the number of inputs \cite{levine2009analysis}. In the case of linear systems \cite{sira2004differential} the flat differentiability (existence and constructive forms) is implied by the controllability property.\eot
\end{rem}

Within the multi-agent framework, the most important aspect of construction \eqref{eq:flat_out}--\eqref{eq:diff_a} is that it reduces the problem of trajectory generation to finding an adequate flat output \eqref{eq:flat_out}. This means choosing $z(t)$ such that, via mappings $\Theta(\cdot),\Phi(\cdot)$, various constraints on state and inputs \eqref{eq:diff_a} are verified. Since the flat output may be difficult to compute under these restrictions, we parametrize $z(t)$ using a set of smooth basis functions $\Lambda^i(t)$:
\be
\label{eq:paramFlat}
z(t)=\sum\limits_{i=1}^{N}{\alpha_i \Lambda^i(t)}, \ \  \alpha_i \in \re{}.
\ee 
Parameter $N$ depends on the number of constraints imposed onto the dynamics \cite{wilkinson}.

There are multiple choices for the basis functions $\Lambda^i(t)$. Among these, \emph{B-spline} basis functions are well-suited to flatness parametrization due to their ease of enforcing continuity and because their degree depends only up to which derivative is needed to ensure continuity \cite{suryawan2012constrained,de2009flatness}.

\subsection{B-splines}
\label{sec:bsplines}

A B-spline of order $d$ is characterized by a \emph{knot-vector} \cite{gordon1974b,patrikalakis2009shape} 
\be
\label{eq:knot_vector}
\mathbb T=\left\{\tau_0,\tau_1\dots  \tau_m\right\},
\ee
of non-decreasing time instants ($\tau_0\leq \tau_1\leq\dots \leq \tau_m$) which parametrizes the associated basis functions $B_{i,d}(t)$:
\bse
\begin{align}
\label{eq:bsplines_0}
B_{i,1}(t)&=\begin{cases}1, \textrm{ for }\tau_i\leq t<\tau_{i+1}\\0\textrm{ otherwise}\end{cases},\\
\label{eq:bsplines_d}
B_{i,d}(t)&=\frac{t-\tau_i}{\tau_{i+d-1}-\tau_i}B_{i,d-1}(t)+\frac{\tau_{i+d}-t}{\tau_{i+d}-\tau_{i+1}}B_{i+1,d-1}(t)
\end{align}
\ese
for $d>1$ and $i=0,1\dots n=m-d$.

Considering a collection of \emph{control points} 
\be
\label{eq:control_points}
\mathbb P=\left\{p_0,p_1\dots  p_n\right\},
\ee
we define a \emph{B-spline curve} as a linear combination of the control points \eqref{eq:control_points} and the B-spline functions \eqref{eq:bsplines_0}--\eqref{eq:bsplines_d} 
\be
\label{eq:bspline_curve}
z(t)=\sum\limits_{i=0}^n B_{i,d}(t)p_i=\mathbf P\mathbf B_d(t)
\ee
where $\mathbf P=\bbm p_0\dots  p_n\ebm$ and $\mathbf B_d(t)=\bbm B_{0,d}(t)\dots  B_{n,d}(t)\ebm^T$.
This construction yields several interesting properties \cite{piegl1995curve}:
\begin{enumerate}
\item[P1)]\label{p:1} $z(t)$ is $C^\infty$ in any $t\notin \mathbb T$ and $C^{d-1}$ in any $t\in \mathbb T$;
\item[P2)]\label{p:3} at a time instant $\tau_i<t<\tau_{i+1}$, $z(t)$ depends only on the B-splines $B_{i-d+1,d}(t)\dots  B_{i,d}(t)$; consequently, the B-spline curve $z(t)$ lies within the union of all convex hulls formed by all $d$ successive control points;
\item[P3)]\label{p:5} the `r' order derivatives of B-spline basis functions can be expressed as linear combinations of B-splines of lower order ($\mathbf B_{d}^{(r)}(t)=M_r \mathbf B_{d-r}(t)$ with matrices $M_r$ of appropriate dimensions and content);
\item[P4)] taking the first and last $d$ knot elements equal ($\tau_0=\dots = \tau_{d-1}$ and $\tau_{n+1}=\dots \tau_{n+d}$) leads to a \emph{clamped B-spline curve} where the first and last control points coincide with the curve's end points.
\end{enumerate}

\subsection*{Illustrative example}

Let us consider for exemplification the B-spline basis functions defined by parameters $n=5$ and $d=4$ with the knot vector of length $m=n+d=9$ and with components equally sampled between $0$ and $1$. Using these elements we construct the B-spline curve \eqref{eq:bspline_curve} which we depict in \figref{fig:bsplines_curve}.
\begin{figure}[!ht]
\centering
\includegraphics[width=.9\columnwidth]{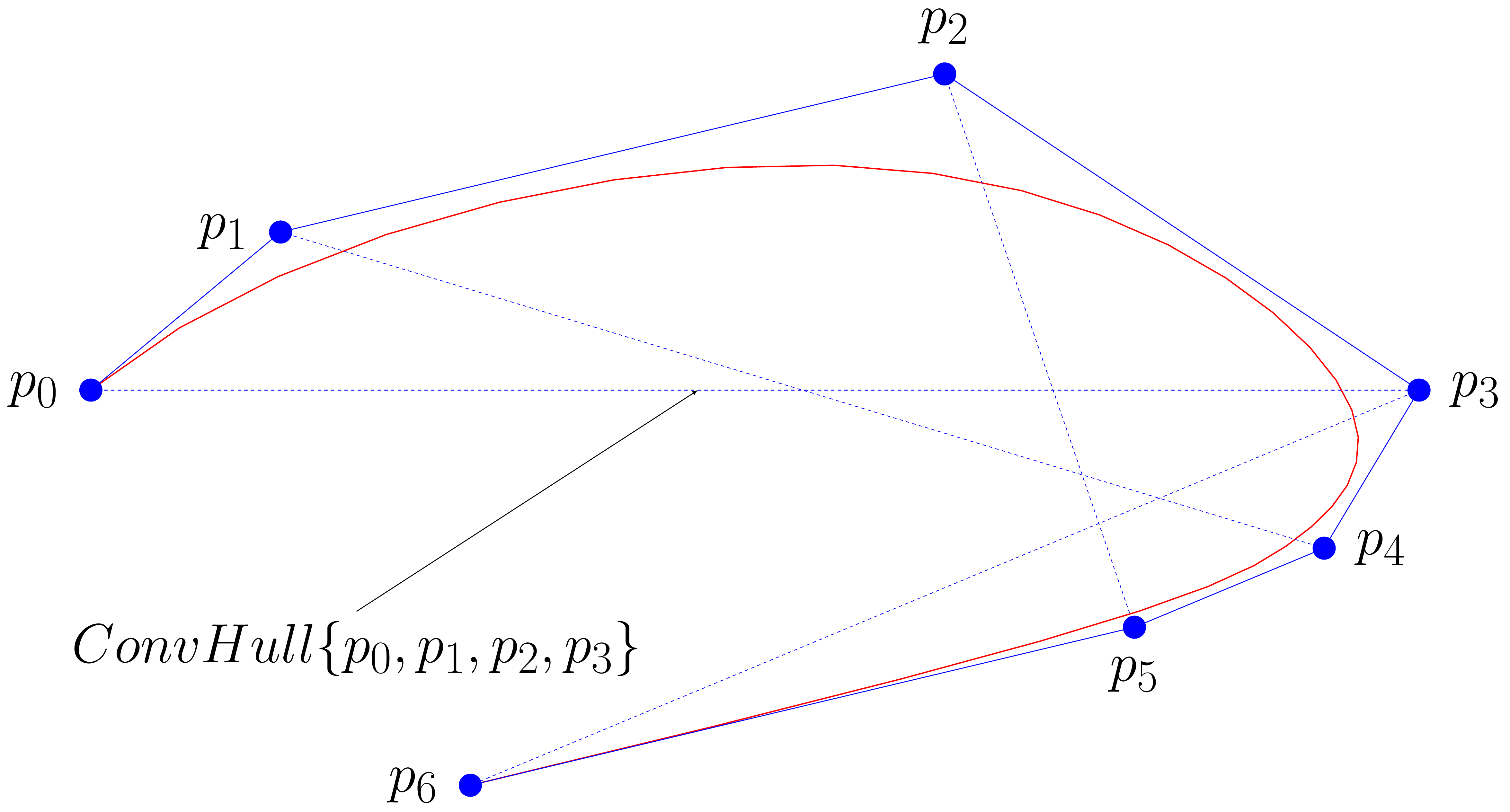}
\caption{B-spline curve (solid red) and its control polygone (dashed blue).}
\label{fig:bsplines_curve}
\end{figure}
Note that the convex-hulls determined by all $d=4$ consecutive control points (\pref{p:3}) contain the curve:
\ben
z(t)\subset \bigcup\limits_{i=3\dots 5}{\chull}\{p_{i-3},p_{i-2},p_{i-1},p_{i}\},\: \forall t\in [0,1].
\een
In particular, the patterned shape depicted in the figure denotes region ${\chull}\{p_{2},p_{3},p_{4},p_{5}\}$ which contains the curve $z(t)$ for any $t\in [\tau_2,\tau_3]$.

\section{Flat trajectory generation}
\label{sec:flat}

Let us consider a collection of $N+1$ way-points and the time stamps associated to them\footnote{In particular we may consider just two way-points: the start and end points of the trajectory.}:
\be
\label{eq:wt}
\mathbb W=\{w_s\}\textrm{ and } \mathbb T_{\mathbb W}=\{t_s\},
\ee
for any $s=0\dots N$. The goal is to construct a flat trajectory which passes through each way-point $w_s$ at the time instant $t_s$ (or through a predefined neighborhood of it \cite{med2015}), i.e., to find a flat output $z(t)$ such that 
\be
\label{eq:xconstr}
x(t_s)=\Theta(z(t_s),\dots z^{(r)}(t_s))=w_s, \: \forall s=0\dots  N.
\ee 
\begin{rem}
\label{rem:waypoints}
Note that here we assume that the way-points are defined over the entire state. Arguably there might be situations where only a subspace of the state is of interest (e.g., only the position components of the state).\eor
\end{rem}
Making use of the B-spline framework we provide a vector of control points \eqref{eq:control_points} and its associated knot-vector \eqref{eq:knot_vector} such that \eqref{eq:xconstr} is verified (parameter $d$ is chosen such that continuity constraints are respected):
\be
\label{eq:flatconstraints}
\tilde\Theta(\mathbf B_d(t_s),\mathbf P)=w_s,\: \forall s=0\dots  N,
\ee
where $\tilde \Theta(\mathbf B_d(t),\mathbf P)=\Theta (\mathbf P\mathbf B_d(t)\dots  \mathbf P M_r\mathbf B_{d-r}(t))$ is constructed along \pref{p:5}. 

Let us assume that the knot-vector is fixed ($\tau_0=t_0$, $\tau_{n+d}=t_N$ and the intermediary points $\tau_{1}\dots  \tau_{n+d-1}$ are equally distributed along  these extremes). Then, we can write an optimization problem with control points $p_i$ as decision variables whose goal is to minimize a cost $\Xi(x(t),u(t))$ along the time interval $[t_0,t_N]$:
\be
\label{eq:flatcost}
\begin{split}
\mathbf P=&\arg\min\limits_{\mathbf P} \int_{t_0}^{t_N}||\tilde \Xi(\mathbf B_d(t),\mathbf P)||_Qdt\\
&\textrm{s.t. constraints \eqref{eq:flatconstraints} are verified}
\end{split}
\ee
with $Q$ a positive symmetric matrix. The cost \[\tilde \Xi(\mathbf B_d(t),\mathbf P)=\Xi(\tilde \Theta(\mathbf B_d(t),\mathbf P),\tilde \Phi(\mathbf B_d(t),\mathbf P))\] can impose any penalization we deem necessary (length of the trajectory, input variation, input magnitude, etc). In general, such a problem is nonlinear (due to mappings $\tilde\Theta(\cdot)$ and $\tilde\Phi(\cdot)$) and hence difficult to solve. A nonlinear MPC iterative approach has been extensively studied \cite{de2009flatness}. 

With these tools at hand we can propose various methods for collision avoidance in a multi-agent multi-obstacle environment.
\section{Main idea}
\label{sec:main}
Let us consider a collection of polyhedral obstacles
\be
\label{eq:obstacles}
\mathbb O=\{O_1\dots O_{N_o}\}
\ee
and assume that the k-th agent follows a trajectory $r_k(t)$ during the interval $[t_0, t_N]$, generated as in \eqref{eq:xconstr} through a collection of control points\footnote{For convenience we keep a common degree `d' and number of control points `n'. This is a reasonable assumption as long as the agents have a common dynamic and follow similar restrictions.} $\mathbb P^k=\{p_j^k\}$ and the associated knot vector $\mathbb T^k=\{\tau_j^k\}$. 

Consequently, the collision avoidance conditions mentioned earlier can be formulated as follows:
\begin{enumerate}
\item[(i)] collision avoidance between the k-th agent and l-th obstacle: 
\be
\label{eq:real_avoidance_ao}
r_k(t)\notin O_l, \forall t\in [t_0,t_N],
\ee
\item[(ii)] collision avoidance between the $k_1$-th and $k_2$-th agents (for any $k_1\neq k_2)$:
\be
\label{eq:real_avoidance_aa}
r_{k_1}(t)\neq r_{k_2}(t), \forall t\in [t_0,t_N].
\ee
\end{enumerate}

\subsection{The exact case}

The distinctive feature of conditions \eqref{eq:real_avoidance_ao}--\eqref{eq:real_avoidance_aa} is that they require a continuous time interval ($[t_0,t_N]$) validation (i.e., imposing constraints at discrete time instants $t_k$ along the interval is not deemed sufficient). Consequently, we make use of property \pref{p:3} which allows to bound the continuous B-spline parametrized curve by its control points. Coupling this with the \emph{separating hyperplane theorem} (a well-known construction \cite{sion1958general} which states that  for any two disjoint convex objects there exists a separating hyperplane) several results are attainable.

First, we provide a slight reformulation of Proposition 1 from \cite{stoican2015flat}.
\begin{prop}
\label{prop:avoid}
The k-th agent is guaranteed to avoid obstacles \eqref{eq:obstacles}, i.e., to verify \eqref{eq:real_avoidance_ao}, if there $\exists c_{il}^k\in \re{n}$ s.t.
\be
\label{eq:avoidance}
\max\limits_{j\in\{i-d+1\dots  i\}}\left(c_{il}^{k}\right)^\top p_j^k\leq \min\limits_{x\in \tilde\Theta^{-1}(O_l)} \left(c_{il}^k\right)^\top x,
\ee
for $i=d-1\dots  n$ and $\forall O_l\in \mathbb O$.\eot
\end{prop}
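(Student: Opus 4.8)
The plan is to reduce the continuous-time, infinitely-constrained requirement \eqref{eq:real_avoidance_ao} to a finite family of static geometric separations---one per active knot subinterval---by combining the local convex-hull containment of \pref{p:3} with the separating-hyperplane idea. Concretely, each candidate vector $c_{il}^k$ plays the role of the normal of a hyperplane that puts the $d$ active control points on one side and the (pulled-back) obstacle on the other, and the convex-hull property guarantees that the curve, trapped among those control points, inherits the same separation.

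First I would partition the horizon: for the clamped knot vector one has $[t_0,t_N]=\bigcup_{i=d-1}^{n}[\tau_i^k,\tau_{i+1}^k]$, so every $t$ falls in some $[\tau_i^k,\tau_{i+1}^k]$ with $i\in\{d-1,\dots,n\}$, which is exactly the index set of \eqref{eq:avoidance}. Fixing such an $i$ and a point $t$ in its subinterval, \pref{p:3} gives $z(t)\in\chull\{p_{i-d+1}^k,\dots,p_i^k\}$, i.e. $z(t)$ is a convex combination of the $d$ active control points. Evaluating the linear functional then yields $(c_{il}^k)^\top z(t)\le\max_{j\in\{i-d+1,\dots,i\}}(c_{il}^k)^\top p_j^k$, since a linear functional over a polytope is maximized at a vertex. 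Hypothesis \eqref{eq:avoidance} chains this to $\min_{x\in\tilde\Theta^{-1}(O_l)}(c_{il}^k)^\top x$, so $(c_{il}^k)^\top z(t)\le(c_{il}^k)^\top x$ for every $x\in\tilde\Theta^{-1}(O_l)$. Were $r_k(t)\in O_l$, i.e. $z(t)\in\tilde\Theta^{-1}(O_l)$, the reverse inequality would also hold and force equality, so $z(t)$ cannot sit in the interior of the pulled-back obstacle; taking the union over $i$ and over all $O_l\in\mathbb O$ then establishes \eqref{eq:real_avoidance_ao} on the whole interval.

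The main obstacle I anticipate is the bookkeeping across the two spaces: the control points $p_j^k$ and \pref{p:3} live in flat-output coordinates, whereas the obstacles $O_l$ live in the state space, so one must argue cleanly that avoiding the preimage $\tilde\Theta^{-1}(O_l)$ is equivalent to \eqref{eq:real_avoidance_ao} and that all inner products $(c_{il}^k)^\top(\cdot)$ are taken consistently in that common space. A secondary, milder, point is that the non-strict inequality in \eqref{eq:avoidance} only excludes penetration of the interior---boundary grazing remains possible---which is the intended meaning of \emph{guaranteed avoidance} here and can be sharpened with a margin if strict exclusion is wanted. Finally, since only the sufficiency direction is claimed, convexity of $\tilde\Theta^{-1}(O_l)$ is not actually needed in the argument; the separating-hyperplane theorem is invoked only to justify that a suitable $c_{il}^k$ can be sought in practice.
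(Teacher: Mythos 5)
Your proof is correct and takes essentially the same route as the paper's: \pref{p:3} traps the curve inside the convex hulls of $d$ consecutive control points, and hypothesis \eqref{eq:avoidance} supplies a hyperplane with normal $c_{il}^k$ separating each such hull from $\tilde\Theta^{-1}(O_l)$ --- the paper states this in two sentences, while you spell it out via vertex-maximization of the linear functional over the hull. Your caveat that the non-strict inequality only rules out interior penetration (boundary grazing remains possible) is a fair refinement of a point the paper silently glosses over, but it does not change the argument.
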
 
\begin{proof}
Condition \eqref{eq:avoidance} states that there exists a hyperplane defined by its normal $c_{il}^{k}$ which separates the points $\{p_{i-d+1}^k\dots p_{i}^k\}$ from the obstacle $\tilde\Theta^{-1}(O_l)$. Since, according to \pref{p:3}, the curve \eqref{eq:xconstr} is contained in $\cup_{i=d\dots n}\chull\{p_{i-d+1}^k\dots p_{i}^k\}$ it follows that \eqref{eq:avoidance} is a sufficient condition to verify \eqref{eq:real_avoidance_ao}.\qed
\end{proof}
\begin{rem}
In \propref{prop:avoid} note the use of mapping $\tilde\Theta^{-1}(\cdot)$. This appears because the obstacle avoidance constraint is the state-space whereas \eqref{eq:avoidance} is in the control point space.\eor
\end{rem}

A similar reasoning is employed for the inter-agent collision condition \eqref{eq:real_avoidance_aa}.
\begin{prop}
\label{prop:avoid_inter}
The pair ($k_1,k_2$) of agents, with $k_1\neq k_2$, is guaranteed to avoid collision, i.e., to validate \eqref{eq:real_avoidance_aa}, if there $\exists c_{i_1i_2}^{k_1k_2}\in \re{n}$ s.t.:
\be
\label{eq:avoid_inter}
\max\limits_{j\in \{i_1-d+1\dots  i_1\}}{\left(c_{i_1i_2}^{k_1k_2}\right)^\top p_j^{k_1}}\le \!\!\!\!\!\!\!\!\min\limits_{j\in \{i_2-d+1\dots  i_2\}}{\left(c_{i_1i_2}^{k_1k_2}\right)^\top p_j^{k_2}},
\ee
for all possible pairs $(i_1, i_2)$ which validate
\be
\label{eq:intersection_aa_indices}
\{(i_1,i_2):\: [\tau_{i_1}^{k_1},\tau_{i_1+1}^{k_1}]\cap [\tau_{i_2}^{k_2},\tau_{i_2+1}^{k_2}]\neq \emptyset\}.
\ee \eot
\end{prop}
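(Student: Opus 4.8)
The plan is to mirror the proof of \propref{prop:avoid}, replacing the fixed obstacle $\tilde\Theta^{-1}(O_l)$ with the second agent's local convex hull and being careful about the time-synchronization between the two curves. First I would recall from \pref{p:3} that, on a knot interval $[\tau_{i_1}^{k_1},\tau_{i_1+1}^{k_1}]$, the curve $r_{k_1}(t)$ is contained in $\chull\{p_{i_1-d+1}^{k_1}\dots p_{i_1}^{k_1}\}$, and similarly $r_{k_2}(t)$ lies in $\chull\{p_{i_2-d+1}^{k_2}\dots p_{i_2}^{k_2}\}$ on $[\tau_{i_2}^{k_2},\tau_{i_2+1}^{k_2}]$. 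Condition \eqref{eq:avoid_inter} asserts exactly that the hyperplane with normal $c_{i_1i_2}^{k_1k_2}$ separates these two finite point sets, hence (by taking convex combinations) separates the two convex hulls, so they are disjoint.

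The key observation, which is where the index set \eqref{eq:intersection_aa_indices} enters, is that a collision $r_{k_1}(t)=r_{k_2}(t)$ can only occur at a \emph{common} time $t$. At any such $t$ we must have $t\in[\tau_{i_1}^{k_1},\tau_{i_1+1}^{k_1}]$ for some $i_1$ and simultaneously $t\in[\tau_{i_2}^{k_2},\tau_{i_2+1}^{k_2}]$ for some $i_2$; in particular the two knot intervals intersect, so the pair $(i_1,i_2)$ belongs to the set \eqref{eq:intersection_aa_indices}. Thus I would argue by contraposition: suppose a collision occurs at time $t^\ast$, locate the corresponding pair $(i_1,i_2)$, note it satisfies \eqref{eq:intersection_aa_indices}, and observe that the common point $r_{k_1}(t^\ast)=r_{k_2}(t^\ast)$ lies in both local convex hulls — contradicting the separation guaranteed by \eqref{eq:avoid_inter} for that pair. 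Hence enforcing \eqref{eq:avoid_inter} for all admissible pairs rules out every possible collision, establishing \eqref{eq:real_avoidance_aa}.

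The main obstacle, and the only real subtlety beyond the single-agent case, is the bookkeeping over which index pairs $(i_1,i_2)$ actually need a constraint. Because the two agents may carry different knot vectors $\mathbb T^{k_1}$ and $\mathbb T^{k_2}$, their knot intervals are generally misaligned in time, and imposing \eqref{eq:avoid_inter} for all pairs would be both wasteful and, more importantly, unachievable when two non-overlapping segments are forced apart spuriously. The restriction to pairs satisfying \eqref{eq:intersection_aa_indices} is precisely what guarantees we separate only those hull pairs that can be active at a shared time instant; I would make explicit that segments whose knot intervals are disjoint can never be occupied simultaneously and therefore require no separating hyperplane. Finally I would note the degenerate case where the intersection in \eqref{eq:intersection_aa_indices} reduces to a single shared knot instant, which is still covered since the closed intervals include their endpoints and \pref{p:1} ensures the curve values there coincide with the appropriate hull.
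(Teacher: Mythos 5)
Your proposal is correct and follows essentially the same route as the paper's own proof: bound each agent's curve locally by its control-point convex hulls (\pref{p:3}), observe that a collision can only occur at a common time instant so only index pairs in \eqref{eq:intersection_aa_indices} matter, and invoke the separating hyperplane condition \eqref{eq:avoid_inter} to keep the relevant hull pairs apart. Your write-up is in fact more detailed than the paper's two-sentence argument (the explicit contraposition and the discussion of misaligned knot vectors and shared knot endpoints), but the underlying idea is identical.
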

\begin{proof}
Recall that (as per \pref{p:3}) a region $\chull\{p_{i-d+1}\dots p_{i}\}$ contains the B-spline curve in the time interval $[\tau_{i}, \tau_{i+1}]$. Applying this to the agents $k_1$ and $k_2$ means that all regions corresponding to indices \eqref{eq:intersection_aa_indices} should not intersect as they contain overlapping time instants. The separation is enforced by \eqref{eq:avoid_inter} which is a sufficient condition for \eqref{eq:real_avoidance_aa}.
\qed
\end{proof}

\begin{rem}
\label{rem:avoid_simpl}
Eq. \eqref{eq:intersection_aa_indices} can be avoided altogether if the B-spline parametrizations share the same knot vector (i.e., $\mathbb T^{k_1}=\mathbb T^{k_2}$). In such a case, variable $c_{i_1i_2}^{k_1k_2}$ becomes $c_{i}^{k_1k_2}$ and condition \eqref{eq:avoid_inter} is simplified to 
\be
\label{eq:avoid_inter2}
\max\limits_{j\in \{i-d+1\dots  i\}}{\left(c_{i}^{k_1k_2}\right)^\top p_j^{k_1}}\le \min\limits_{j\in \{i-d+1\dots i\}}{\left(c_{i}^{k_1k_2}\right)^\top p_j^{k_2}},
\ee
for all $i\in \{d-1\dots  n\}$. \eor
\end{rem}

\subsection{The sub-optimal case}

Verifying \eqref{eq:avoidance} (or \eqref{eq:avoid_inter}) is difficult in practice due to the presence of bi-linear terms (e.g., in \propref{prop:avoid} both $c_{il}$ and $p_j^k$ are variables). Hereafter we propose a simpler (and hence sub-optimal) implementation. 

The main idea is that instead of letting the separating hyperplane from \propref{prop:avoid} or~\ref{prop:avoid_inter} be itself a variable, we choose from within a predefined pool of hyperplanes. A natural choice is to select from the support hyperplanes of the obstacles. By definition, such a hyperplane contains on one side the obstacle and hence, it remains only to check whether the control points lie on the opposite side. The selection of the active hyperplane is done through decision variables (i.e., binary variables) which leads to a mixed-integer pseudo-linear formulation. 

To generate the collection of hyperplanes, we consider the polyhedral sets bounding\footnote{We assume that $\tilde \Theta^{-1}(O_l)$ is a polyhedral set as well. In general this might not hold, but in that case a polyhedral approximation can be obtained.} $\tilde \Theta^{-1}(O_l)$ and take the support hyperplanes which characterize them:
\be
\label{eq:hyp}
\mathcal H_m=\{x:\:h_m^\top x=k_m\}, \forall m=1\dots M.
\ee
Each of these hyperplanes partitions the space in two ``half-spaces'':
\begin{align}
\label{eq:hyp_plus}
\mathcal H_m^+&=\{x:\:\hphantom{-}h_m^\top x\leq \hphantom{-}k_m\},\\
\label{eq:hyp_minus}
\mathcal H_m^-&=\{x:\:-h_m^\top x\leq -k_m\}.
\end{align}
Taking into account all possible combinations of half-spaces leads to a hyperplane arrangement which divides the space into a collection of disjoint cells which are completely characterized by sign tuples \cite{ferrez2001cuts}:
\be
\label{eq:hyparr}
\mathbb H=\bigcup\limits_{\sigma\in \Sigma} \mathcal A(\sigma)=\bigcup\limits_{\sigma\in \Sigma} \left(\bigcup\limits_{m=1}^M \mathcal H_m^{\sigma(m)}\right)
\ee
where $\Sigma\subset \{-,+\}^M$ denotes the collection of all feasible (corresponding to non-empty regions $\mathcal A(\sigma)$) sign tuples. Each of these tuples can be allocated to either 
\begin{enumerate}
\item the admissible domain $\re{n}\setminus \mathbb O$:
\be
\label{eq:sigma_adm}
\Sigma^\circ=\{\sigma:\: \mathcal A(\sigma)\cap \mathbb O =\emptyset\},
\ee
\item or the interdicted domain $\mathbb O$:
\be
\label{eq:sigma_int}
\Sigma^\bullet=\{\sigma:\: \mathcal A(\sigma)\cap \mathbb O \neq \emptyset\},
\ee
\end{enumerate}
where $\Sigma^\bullet\cap \Sigma^\circ=\emptyset$ and $\Sigma^\bullet\cup \Sigma^\circ=\Sigma$. With these elements we can provide the following corollaries.
\begin{cor}
\label{cor:avoid}
For an obstacle $\mathcal A(\sigma^\bullet)$ with $\sigma^\bullet\in \Sigma^\bullet$, a sufficient condition to guarantee\footnote{We make the simplifying assumption that to each obstacle $O_l$ corresponds a single sign tuple.} \eqref{eq:real_avoidance_ao} is:
\bse
\begin{align}
\label{eq:avoidance2a}-\sigma^\bullet(m)h_m^\top p_j^k&\leq -\sigma^\bullet(m)k_m+T\alpha_{im}^k,\nonumber\\
\qquad \quad &\forall m=1\dots M, j=i-d+1\dots i\\
\label{eq:avoidance2b}\sum\limits_{m=1}^{M}\alpha_{im}^k&\leq M-1
\end{align}
\ese
for $i\in\{d-1\dots n\}$.\eot
\end{cor}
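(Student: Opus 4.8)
The plan is to show that the mixed-integer constraints \eqref{eq:avoidance2a}--\eqref{eq:avoidance2b} encode exactly the separating-hyperplane requirement of \propref{prop:avoid}, but with the separating hyperplane restricted to the support hyperplanes $\mathcal H_m$ of the obstacle cell $\mathcal A(\sigma^\bullet)$ and with its selection handled by the binaries $\alpha_{im}^k\in\{0,1\}$ together with the big-$T$ constant. Under the footnote assumption that the obstacle coincides with the single cell $\mathcal A(\sigma^\bullet)$, the whole argument then reduces to verifying that, for each convex-hull index $i$, at least one support hyperplane separates all $d$ relevant control points from $\mathcal A(\sigma^\bullet)$, after which \pref{p:3} closes the gap between the control hulls and the continuous curve.

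First I would rewrite cell membership in a sign-aware form: a point $x$ belongs to $\mathcal A(\sigma^\bullet)=\bigcap_{m=1}^M\mathcal H_m^{\sigma^\bullet(m)}$ if and only if $\sigma^\bullet(m)\,h_m^\top x\le \sigma^\bullet(m)\,k_m$ for every $m$, reading $\sigma^\bullet(m)$ as $\pm1$. Next I would inspect \eqref{eq:avoidance2a} for the two admissible values of the binary. Multiplying through by $-1$ gives $\sigma^\bullet(m)\,h_m^\top p_j^k\ge \sigma^\bullet(m)\,k_m-T\alpha_{im}^k$; for $\alpha_{im}^k=0$ this forces $p_j^k$ into the closed half-space opposite to the cell across $\mathcal H_m$, while for $\alpha_{im}^k=1$ the bound is rendered inactive provided $T$ is large enough to dominate $|\sigma^\bullet(m)(h_m^\top p_j^k-k_m)|$ over the feasible control-point domain. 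Thus $\alpha_{im}^k=0$ means ``hyperplane $m$ actively separates $p_j^k$'' and $\alpha_{im}^k=1$ means ``hyperplane $m$ is switched off.''

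The combinatorial heart is then the budget constraint \eqref{eq:avoidance2b}: since $\sum_m\alpha_{im}^k\le M-1$ with binary $\alpha$'s, for each segment index $i$ there is at least one index $m^\star=m^\star(i)$ with $\alpha_{i m^\star}^k=0$. Because \eqref{eq:avoidance2a} is imposed for every $j=i-d+1\dots i$ simultaneously, all $d$ control points $\{p_{i-d+1}^k\dots p_i^k\}$ obey $\sigma^\bullet(m^\star)\,h_{m^\star}^\top p_j^k\ge \sigma^\bullet(m^\star)\,k_{m^\star}$, hence by convexity the whole set $\chull\{p_{i-d+1}^k\dots p_i^k\}$ lies in that single closed half-space. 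This half-space and the cell $\mathcal A(\sigma^\bullet)\subseteq\mathcal H_{m^\star}^{\sigma^\bullet(m^\star)}$ meet at most on the bounding hyperplane $\mathcal H_{m^\star}$, so the convex hull is separated from the obstacle exactly as required by \eqref{eq:avoidance} in \propref{prop:avoid} with the choice $c_{il}^k=\sigma^\bullet(m^\star)h_{m^\star}$.

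To finish, I would invoke \pref{p:3}: on $t\in[\tau_i,\tau_{i+1}]$ the curve $r_k(t)$ stays inside $\chull\{p_{i-d+1}^k\dots p_i^k\}$, which we just showed is disjoint from the obstacle, and letting $i$ range over $d-1\dots n$ covers the full horizon $[t_0,t_N]$, yielding \eqref{eq:real_avoidance_ao}. I expect the only real subtleties to be bookkeeping rather than mathematics: (i) pinning down a valid big-$T$ so that $\alpha_{im}^k=1$ genuinely deactivates the constraint without accidentally excluding feasible control points, and (ii) the non-strictness of the support-hyperplane separation, which permits the trajectory to touch $\partial\mathcal A(\sigma^\bullet)$; both are handled by a finite bound on the control-point region and, if strict avoidance is desired, an arbitrarily small safety margin added to each $k_m$.
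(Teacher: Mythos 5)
Your proposal is correct and follows essentially the same argument as the paper's proof: $\alpha_{im}^k=0$ activates separation of the $d$ control points from the cell $\mathcal A(\sigma^\bullet)$ by the $m$-th support hyperplane, $\alpha_{im}^k=1$ deactivates that inequality through the big-$T$ term, and \eqref{eq:avoidance2b} forces at least one active hyperplane for each index $i$, after which \pref{p:3} carries the separation of the control hulls over to the continuous curve. Your write-up is simply more explicit than the paper's terse proof (the sign-aware characterization of cell membership, the explicit choice $c_{il}^k=\sigma^\bullet(m^\star)h_{m^\star}$ linking back to \propref{prop:avoid}, and the caveats on a valid big-$T$ and on non-strict separation), but these are refinements of the same route, not a different one.
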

\begin{proof}
Taking in \eqref{eq:avoidance2a} he binary variable `$\alpha_{im}^k=0$' means that the i-th region $\chull\{p_{i-d+1}^k\dots  p_i^k\}$ of the k-th agent sits on the opposite side of the obstacle $\mathcal A(\sigma^\bullet)$ with respect to the the m-th hyperplane. The converse, taking `$\alpha_{jm}^k=1$' means that inequality \eqref{eq:avoidance2a} is discarded since the right hand term is sufficiently large to ignore the values on the left side (assuming that `$T$' was taken as a sufficiently large positive constant).

Condition \eqref{eq:avoidance2b} forces that for any consecutive $d+1$ points at least one of the inequalities \eqref{eq:avoidance2a} is enforced since at least one of the variables $\alpha_{im}^k$ has to be zero.\qed
\end{proof}
Furthermore, we relax \propref{prop:avoid_inter} (in addition to the simplification proposed in \remref{rem:avoid_simpl}) into the following corollary.
\begin{cor}
\label{cor:avoid_inter}
The pair ($k_1,k_2$) of agents, with $k_1\neq k_2$, is guaranteed to avoid collision, i.e., to validate \eqref{eq:real_avoidance_aa}, if:
\begin{align}
\label{eq:avoid_inter_sub}
\max\limits_{j\in \{i-d\dots  i\}}{h_m^\top p_j^{k_1}}&\le \min\limits_{j\in \{i-d+1\dots i\}}{h_m^\top p_j^{k_2}}+T\beta_{im}^{k_1k_2},\\
\label{eq:avoid_inter_sub2}
\sum\limits_{m=1}^{M}\beta_{im}^{k_1k_2}&\leq M-1
\end{align}
for $i\in\{d-1\dots n\}$.\eot
\end{cor}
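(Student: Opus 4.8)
The plan is to reduce Corollary~\ref{cor:avoid_inter} to the simplified separation condition \eqref{eq:avoid_inter2} of Remark~\ref{rem:avoid_simpl}, in exactly the same way that Corollary~\ref{cor:avoid} was obtained from Proposition~\ref{prop:avoid}. The mixed-integer constraints \eqref{eq:avoid_inter_sub}--\eqref{eq:avoid_inter_sub2} are to be read as a disjunctive encoding that, for each index $i$, selects one admissible separating direction out of the finite pool $\{h_1\dots h_M\}$ instead of treating the normal $c_i^{k_1k_2}$ as a free variable.

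First I would interpret the binary variables along the lines of the proof of Corollary~\ref{cor:avoid}. For a fixed $i$, taking $\beta_{im}^{k_1k_2}=1$ inflates the right-hand side of \eqref{eq:avoid_inter_sub} (since $T$ is a sufficiently large positive constant), so that inequality is vacuously satisfied and the $m$-th direction is effectively discarded. Conversely, $\beta_{im}^{k_1k_2}=0$ activates the inequality, which then reads $\max_{j}h_m^\top p_j^{k_1}\le\min_{j}h_m^\top p_j^{k_2}$. The knapsack-type constraint \eqref{eq:avoid_inter_sub2} guarantees that at least one of the $M$ directions is active for each $i$, i.e. at least one $\beta_{im}^{k_1k_2}$ equals zero.

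Next I would show that an active inequality is precisely the hypothesis of the simplified exact condition with the choice $c_i^{k_1k_2}=h_m$. When $\beta_{im}^{k_1k_2}=0$, the activated inequality states that the hyperplane with normal $h_m$ places all control points of agent $k_1$ on one side and all control points of agent $k_2$ on the other, which is exactly \eqref{eq:avoid_inter2} instantiated at $c_i^{k_1k_2}=h_m$. Invoking \pref{p:3}, the B-spline curve of each agent over the shared interval $[\tau_i,\tau_{i+1}]$ lies inside the corresponding convex hull $\chull\{p_{i-d+1}\dots p_i\}$, so separating the two hulls for every $i\in\{d-1\dots n\}$ separates the two curves at all overlapping time instants, yielding \eqref{eq:real_avoidance_aa}. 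The conclusion then follows by the same reasoning as in Proposition~\ref{prop:avoid_inter}.

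The step I expect to be the main obstacle is reconciling the asymmetric index ranges appearing in \eqref{eq:avoid_inter_sub}: the maximum is taken over the $d+1$ points $\{i-d\dots i\}$ for agent $k_1$ while the minimum runs over the $d$ points $\{i-d+1\dots i\}$ for agent $k_2$. Since $\max_{j\in\{i-d\dots i\}}h_m^\top p_j^{k_1}\ge\max_{j\in\{i-d+1\dots i\}}h_m^\top p_j^{k_1}$, the stated inequality is a slight strengthening of the separation actually required by \eqref{eq:avoid_inter2}; I would therefore argue that it remains sufficient, the extra point $p_{i-d}^{k_1}$ only introducing conservatism, and flag whether this asymmetry is intentional (e.g. to also cover the hull associated with the preceding interval) or a typographical slip for $\{i-d+1\dots i\}$.
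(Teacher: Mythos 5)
Your proposal is correct and follows essentially the same route as the paper's own proof: the binary variables select which hyperplane normal $h_m$ plays the role of the separating direction $c_i^{k_1k_2}$ from \eqref{eq:avoid_inter2}, the big-$T$ term discards inactive inequalities, the constraint \eqref{eq:avoid_inter_sub2} forces at least one active hyperplane per index $i$, and \pref{p:3} transfers separation of the hulls to separation of the curves over each shared knot interval. Your handling of the asymmetric index range is also sound --- the paper's own proof speaks of the regions $\chull\{p_{i-d+1}^{k_1}\dots p_i^{k_1}\}$, so the extra point $p_{i-d}^{k_1}$ in \eqref{eq:avoid_inter_sub} is almost certainly a typographical slip, and, as you argue, it only adds conservatism and leaves sufficiency intact.
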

\begin{proof}
The binary variables $\beta_{im}^{k_1k_2}$ denote whether the i-th regions $\chull\{p_{i-d+1}^{k_1}\dots p_i^{k_1}\}$ and $\chull\{p_{i-d+1}^{k_2}\dots p_i^{k_2}\}$ are separated through the m-th hyperplane (whenever $\beta_{im}^{k_1k_2}=0$ the inequality \eqref{eq:avoid_inter_sub} is enforced and otherwise is discarded). Eq. \eqref{eq:avoid_inter_sub2} assures that at least one of the hyperplanes is active.\qed
\end{proof}
Several remarks are in order.
\begin{rem}
\cororef{cor:avoid_inter} considers a simultaneous computation of trajectories. An alternative is to compute them iteratively such that from the point of view of the current agent the obstacles to be avoided at $t\in [\tau_i,\tau_{i+1}]$ become:
\be
\mathbb O\leftarrow \mathbb O\cup\left(\bigcup\limits_{l<k}\chull\{p_{i-d+1}^l\dots p_i^l\}\right)
\ee
where besides the obstacles \eqref{eq:obstacles}, the i-th regions $\chull\{p_{i-d+1}^l\dots p_i^l\}$ of the previous agents (with index $l<k$ and whose trajectories are hence already computed) are also considered as obstacles.  \eor
\end{rem}
\begin{rem}
An agent may have a safety region around it (i.e., because the agent cannot be reduced to a point or due to the presence of disturbances in the dynamics). Whatever the reason, and the modality to obtain it, a safety region $S_k$ can be attached to the k-th agent. Consequently, the collision avoidance constraints \eqref{eq:real_avoidance_ao}--\eqref{eq:real_avoidance_aa} become:
\be
\label{eq:real_avoidance_ao2}
\{r_k(t)\}\oplus S_k\notin O_l, \forall t\in [t_0,t_N],
\ee
and
\be
\label{eq:real_avoidance_aa2}
\{r_{k_1}(t)\}\oplus S_{k_1}\neq \{r_{k_2}(t)\}\oplus S_{k_2}, \forall t\in [t_0,t_N].
\ee
The previous results can be easily adapted to constraints \eqref{eq:real_avoidance_ao2}--\eqref{eq:real_avoidance_aa2} by enlarging the obstacles ($r_k(t)\notin O_l\oplus \{-S_k\}$) and by requiring a larger inter-distance between agents respectively ($r_{k_1}(t)-r_{k_2}(t)\notin \{-S_{k_1}\}\oplus S_{k_2}$). \eor
\end{rem}
\begin{rem}
Lastly, it is worth mentioning that in all previous propositions and corollaries there is no guarantee of feasibility for the optimization problems. The solution is to incrementally increase the number of variables (i.e., the control points) until a feasible solution is reached.\eor
\end{rem}

\section{Illustrative example for an UAV system}
\label{sec:ill}

We revisit the test case from \cite{med2015}. A 2D 3-DOF model \eqref{eq:modelN} of an airplane in which the autopilot forces coordinated turns (zero side-slip) at a fixed altitude:
\be
\label{eq:modelN}
\begin{split}
\dot{x}(t)&=V_\textrm{a}(t) \cos \Psi(t),\\
\dot{y}(t)&=V_\textrm{a}(t) \sin \Psi(t),\\
\dot{\Psi}(t)&=\frac{g \ \tan \Phi(t)}{V_\textrm{a}(t)}
\end{split}
\ee
The state variables are represented by the position $(x(t),y(t))$ and the heading (yaw) angle $\Psi(t) \in [0, 2 \pi]$ rad. The input signals are the airspeed velocity $V_\textrm{a}(t)$ and the bank (roll) angle $\Phi(t)$, respectively. 

We take as flat output the position components of the state, $z(t)=\bbm z_1(t)&z_2(t)\ebm^\top=\bbm x(t)&y(t)\ebm^\top$ which permits to compute the remaining variables:
\bse
\begin{align}
\label{eq:stateRef}
\Psi(t)&=\arctan \left(\frac{\dot{z}_\textrm{2}(t)}{\dot{z}_\textrm{1}(t)}\right), \\
V_a(t)&=\sqrt{\dot{z}^\textrm{2}_\textrm{1}(t) + \dot{z}^\textrm{2}_\textrm{2}(t)},\\
\Phi(t)&=\arctan\left(\frac{1}{g}\frac{\ddot{z}_\textrm{2}(t) \dot{z}_\textrm{1}(t) - \dot{z}_\textrm{2}(t)\ddot{z}_\textrm{1}(t)}{\sqrt{\dot{z}^\textrm{2}_\textrm{1}(t) + \dot{z}^\textrm{2}_\textrm{2}(t)}}\right).
\label{eq:inputRef}
\end{align}
\ese
Note that in the heading component of the state appear 1st order and in the roll angle input appear 2nd order derivatives of the flat outputs. Hence, if we wish to have smooth state and input (their derivatives to be continuous) it follows that the B-spline parametrization has to have at least degree $d=4$.

Further, we consider way-points which fix only the position components of the state and time-stamps at which the trajectory has to pass through them. Thus we manage to skirt some of the thornier numerical aspects: the dependence between the B-spline basis functions and the position components is linear ($\tilde \Theta(\mathbf B_d(t),\mathbf P)=\mathbf P\mathbf B_d(t)$), and hence the cost and constraints will be easily written.

We take as cost to be minimized the length of the curve since we would like to have the shortest path which respects the constraints, i.e., $\tilde \Xi(\mathbf B_{d}(t),\mathbf P)=||z'(t)||$. This translates into the integral cost:
\be
\begin{split}
&\int\limits_{t_0}^{t_N} ||z'(t)||dt=\int\limits_{t_0}^{t_N} ||\mathbf P M_1\mathbf B_{d-1}(t)||dt\\
&=\sum\limits_{i,j}([\mathbf P M_1]_i)^T\left(\int\limits_{t_0}^{t_N}B_{i,d-1}(t)B_{j,d-1}(t)dt\right)[\mathbf P M_1]_j
\end{split}
\ee
where matrix $M_1$ links $\mathbf B_{d-1}(t)$ and $\dot{\mathbf B_{d}}(t)$ as in property \pref{p:5} and $[\cdot]_i$ extracts the i-th column from the argument. Since the inner integrals can be computed numerically, we have now a quadratic formulation of the cost and we can use it for the various constructions from \secref{sec:flat}, see also \cite{de2009flatness} for a similar treatment of cost computations.

For illustration purposes we consider $9$ hyperplanes:
\ben
H=\bbm 
   -0.5931   & \hphantom{-}0.8051\\
    \hphantom{-}0.1814   & \hphantom{-}0.9834\\
   -0.0044   & \hphantom{-}1.0000\\
   -0.1323   & \hphantom{-}0.9912\\
   -0.7011   &-0.7131\\
    \hphantom{-}0.8152   &-0.5792\\
    \hphantom{-}0.4352   & \hphantom{-}0.9003\\
    \hphantom{-}1.0000   &-0.0075\\
   -0.5961   &-0.8029   
\ebm, \quad h=\bbm 
    4.2239\\
    0.1719\\
    0.9975\\
    0.2728\\
    3.6785\\
    0.0317\\
    1.6598\\
    4.5790\\
    1.0280
\ebm
\een
which lead to a hyperplane arrangement \eqref{eq:hyparr} where the interdicted tuples \eqref{eq:sigma_int}
\begin{multline*}
\Sigma^\bullet=\{(+++--+++-),\\ (+-+-+++++), (+-+++--++)\}
\end{multline*}
correspond to three obstacles. Further, we take three way-points (initial, intermediary and final) though which the trajectory has to pass at predefined times:
\ben
\mathbb W=\left\{\bbm -9\\-0.5\ebm, \bbm 0\\1.5\ebm, \bbm 6\\0\ebm \right\} , \: \mathbb T_{\mathbb W}=\{0, 5, 10\}.
\een
\begin{figure*}[!ht]
\centering
\subfloat[various collision avoidance methods]{\label{fig:avoid_a}\includegraphics[width=.95\textwidth]{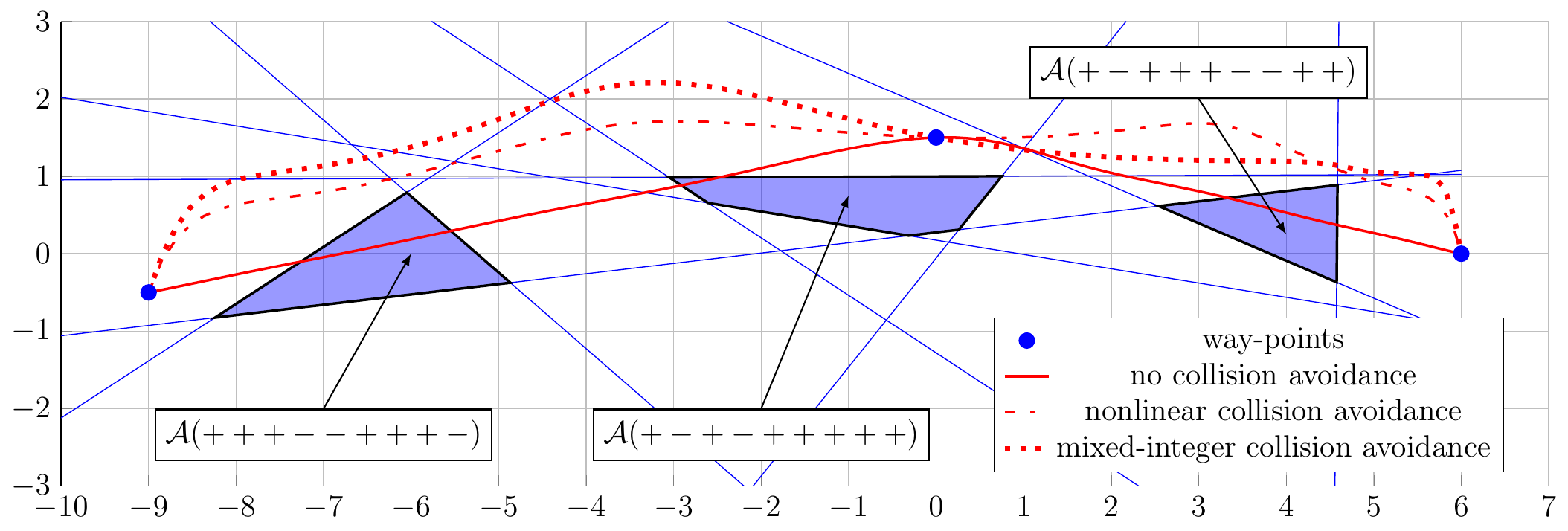}}\\
\subfloat[detail of the nonlinear case]{\label{fig:avoid_b}\includegraphics[width=.45\textwidth]{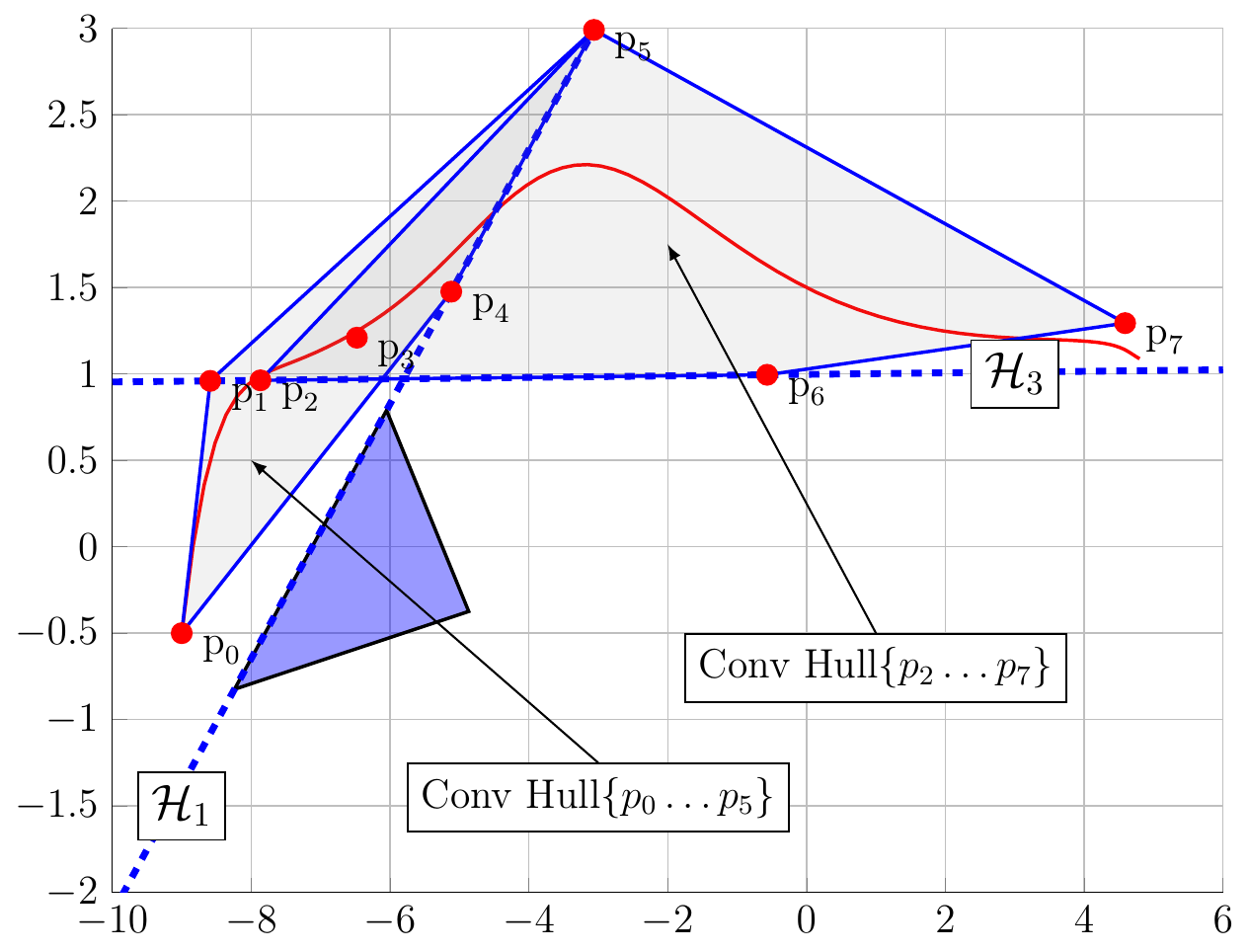}}
\subfloat[detail of the mixed-integer case]{\label{fig:avoid_c}\includegraphics[width=.45\textwidth]{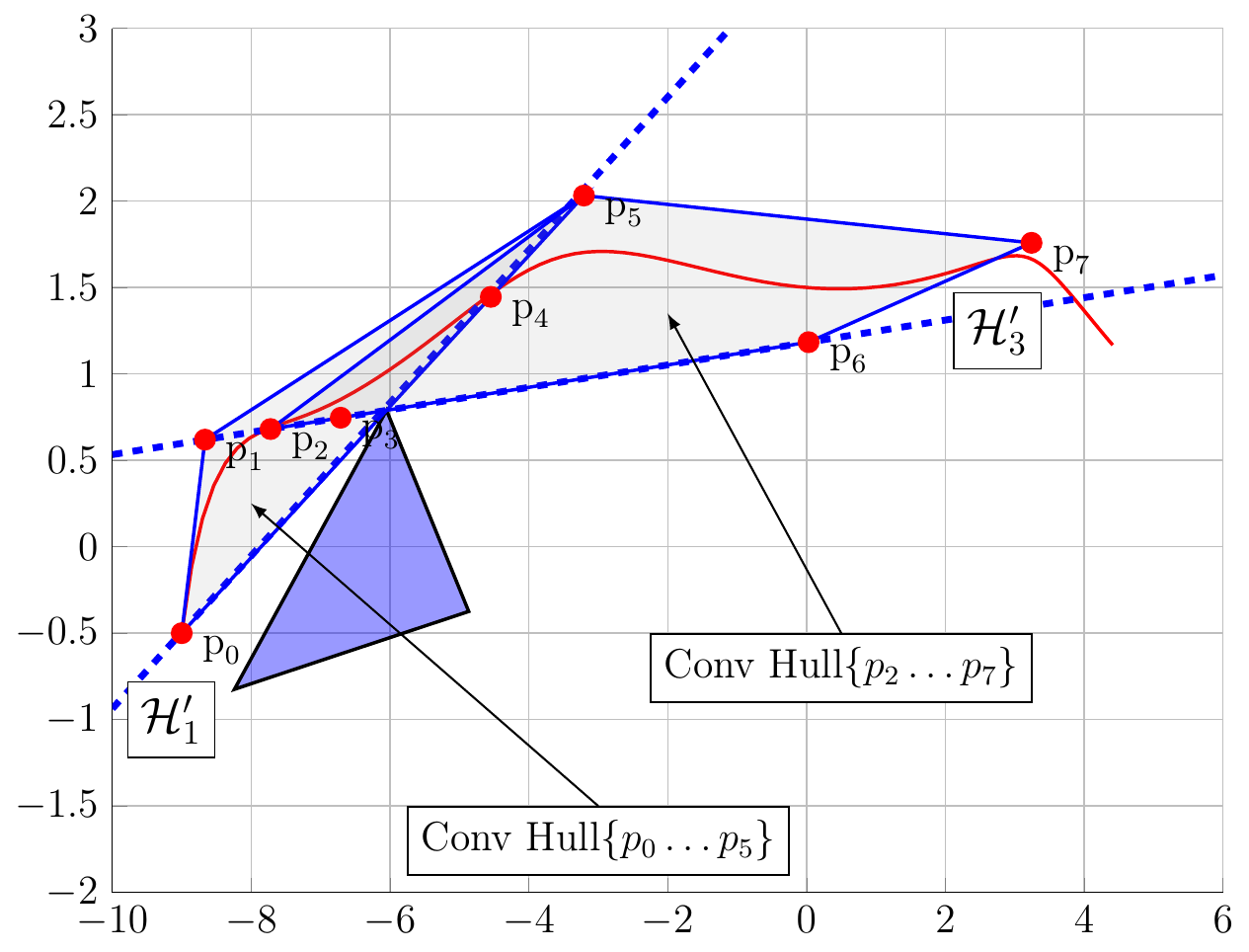}}
\caption{Flat trajectories in a multi-obstacle environment.}
\label{fig:avoid}
\end{figure*}

We compute a flat trajectory (parametrized by $n=12$ and $d=6$) which passes through the given way-points, minimizes the total path-length and respects one of the following scenarios: i) without any collision avoidance restriction; ii) with collision avoidance as in \propref{prop:avoid}; iii) with collision avoidance as in \cororef{cor:avoid}. The obstacles, their support hyperplanes and the resulting trajectories are depicted in \figref{fig:avoid_a}.

Scenarios ii) and iii) both accomplish the task of avoiding the obstacles with comparable computation times and path lengths. Figures \ref{fig:avoid_b} and \ref{fig:avoid_c} show details of the collision avoidance. In both cases the 1st and 3rd convex regions are considered ($\chull\{p_0\dots p_5\}$ and $\chull\{p_2\dots p_7\}$) together with their separating hyperplanes. In \figref{fig:avoid_b} these hyperplanes are the 1st and respectively 3rd support hyperplane whereas in \figref{fig:avoid_c} the separating hyperplanes are the result of the optimization problem and are $\mathcal H_1'=\{\bbm    -0.1273  &  0.2913\ebm^\top x=1 1\}$, $\mathcal H_3'=\{\bbm      -0.0549    0.8461\ebm^\top x=1 1\}$.

As mentioned earlier, the result of the optimization problem (computation time, total length of the trajectory) depends heavily on the number of control points $n+1$ and degree $d$. We illustrate these evolutions in \tabref{tab:evol}. Several remarks are in order. First, it seems that after an initial decrease in the path length the future reductions are negligible and at a significant computation time for the mixed-integer method. Next, and somewhat surprising, is that the non-linear method is extremely sensitive to parameter variations (number of control points, degree, positioning and number of the way-points, etc) such that the results obtained are not trustworthy. The one advantage of the latter over the former is that it may provide a feasible solution for small values of $n$.

\begin{table}
\begin{tabular}{|l|ccccc|}
\hline
  n& 10  & 15 &  20  & 25 &  30\\\hline   
	$t_{MI}$& 0.0985 &   6.231 &  17.106  & 15.034  & 21.522\\
  $\ell_{MI}$&       *&   16.877 &  16.307 &  16.202 &  16.536\\
	$t_{NL}$&37.426 &   0.6412 &   1.055 &   0.6321   & *\\
  $\ell_{NL}$ & 203.51   & 33.469   & 202.076  &  31.159  &  *\\\hline   
\end{tabular}
\caption{Evolution of trajectories characteristics for degree $d=4$ in the non-linear and mixed-integer formulations.}
\label{tab:evol}
\end{table}
 
The collision avoidance between two agents is similar and not depicted here. All the numerical simulations have been done using Yalmip \cite{YALMIP} and MPT Toolboxes \cite{mpt} in Matlab 2013a. The nonlinear solver used was the IPOPT solver \cite{ipopt2006}.

\section{Conclusions}
\label{sec:con}

This paper considers collision avoidance in a multi-agent multi-obstacle framework. Using differential flatness for trajectory generation and B-splines for the flat output parametrization we show that the restriction can be validated at all times. Exact and sub-optimal constructions are provided. Future work may consist in analysis of the feasibility of the problem, relaxation of way-point restrictions (similar with work done in \cite{med2015}), etc.

\section*{Acknowledgments}
This work was supported by a grant of the Romanian National Authority for Scientific Research and Innovation, CNCS – UEFISCDI, project number PN-II-RU-TE-2014-4-2713.


\begin{thebibliography}{10}
\providecommand{\url}[1]{#1}
\csname url@rmstyle\endcsname
\providecommand{\newblock}{\relax}
\providecommand{\bibinfo}[2]{#2}
\providecommand\BIBentrySTDinterwordspacing{\spaceskip=0pt\relax}
\providecommand\BIBentryALTinterwordstretchfactor{4}
\providecommand\BIBentryALTinterwordspacing{\spaceskip=\fontdimen2\font plus
\BIBentryALTinterwordstretchfactor\fontdimen3\font minus
  \fontdimen4\font\relax}
\providecommand\BIBforeignlanguage[2]{{%
\expandafter\ifx\csname l@#1\endcsname\relax
\typeout{** WARNING: IEEEtran.bst: No hyphenation pattern has been}%
\typeout{** loaded for the language `#1'. Using the pattern for}%
\typeout{** the default language instead.}%
\else
\language=\csname l@#1\endcsname
\fi
#2}}

\bibitem{burger2010smooth}
M.~Burger and K.~Pettersen, ``Smooth transitions between trajectory tracking
  and path following for single vehicles and formations,'' in \emph{Estimation
  and Control of Networked Systems}, 2010, pp. 115--120.

\bibitem{aguiar2007trajectory}
A.~Aguiar and J.~Hespanha, ``Trajectory-tracking and path-following of
  underactuated autonomous vehicles with parametric modeling uncertainty,''
  \emph{IEEE Transactions on Automatic Control}, vol.~52, no.~8, pp.
  1362--1379, 2007.

\bibitem{mellinger2012trajectory}
D.~Mellinger, N.~Michael, and V.~Kumar, ``Trajectory generation and control for
  precise aggressive maneuvers with quadrotors,'' \emph{The International
  Journal of Robotics Research}, pp. 027--36, 2012.

\bibitem{prodan2013receding}
I.~Prodan, S.~Bencatel, R.and~Olaru, J.~Sousa, C.~Stoica, and S.-I. Niculescu,
  ``Receding horizon flight control for trajectory tracking of autonomous
  aerial vehicles,'' \emph{Control Engineering Practice}, vol.~21, no.~10, pp.
  1334--1349, 2013.

\bibitem{de2009flatness}
J.~De~Don{\'a}, F.~Suryawan, M.~Seron, and J.~L{\'e}vine, ``A flatness-based
  iterative method for reference trajectory generation in constrained nmpc,''
  in \emph{Nonlinear Model Predictive Control}.\hskip 1em plus 0.5em minus
  0.4em\relax Springer, 2009, pp. 325--333.

\bibitem{suryawan2010methods}
F.~Suryawan, J.~De~Dona, and M.~Seron, ``{Methods for trajectory generation in
  a magnetic-levitation system under constraints},'' in \emph{18th
  Mediterranean Conference on Control and Automation}, 2010, pp. 945--950.

\bibitem{suryawan2012constrained}
F.~Suryawan, ``Constrained trajectory generation and fault tolerant control
  based on differential flatness and b-splines,'' Ph.D. dissertation, School of
  Electrical Engineering and Computer Science, The University of Newcastle,
  Australia, 2012.

\bibitem{med2015}
F.~Stoican, I.~Prodan, , and D.~Popescu, ``Flat trajectory generation for
  way-points relaxations and obstacle avoidance,'' 2015, submitted for
  publication at MED 2015.

\bibitem{fliess1995flatness}
M.~Fliess, J.~L{\'e}vine, P.~Martin, and P.~Rouchon, ``{Flatness and defect of
  non-linear systems: introductory theory and examples},'' \emph{International
  journal of control}, vol.~61, no.~6, pp. 1327--1361, 1995.

\bibitem{levine2009analysis}
J.~L{\'e}vine, \emph{Analysis and control of nonlinear systems: A
  flatness-based approach}.\hskip 1em plus 0.5em minus 0.4em\relax Springer
  Verlag, 2009.

\bibitem{sira2004differential}
H.~Sira-Ram{\'\i}rez and S.~Agrawal, \emph{Differential Flatness}.\hskip 1em
  plus 0.5em minus 0.4em\relax Marcel Dekker, New York, 2004.

\bibitem{wilkinson}
J.~Wilkinson, \emph{The algebraic eigenvalue problem}.\hskip 1em plus 0.5em
  minus 0.4em\relax Oxford Univ. Press, 1965, vol. 155.

\bibitem{gordon1974b}
W.~J. Gordon and R.~F. Riesenfeld, ``B-spline curves and surfaces,''
  \emph{Computer Aided Geometric Design}, vol. 167, p.~95, 1974.

\bibitem{patrikalakis2009shape}
N.~M. Patrikalakis and T.~Maekawa, \emph{Shape interrogation for computer aided
  design and manufacturing}.\hskip 1em plus 0.5em minus 0.4em\relax Springer
  Science \& Business, 2009.

\bibitem{piegl1995curve}
L.~Piegl and W.~Tiller, \emph{Curve and Surface Basics}.\hskip 1em plus 0.5em
  minus 0.4em\relax Springer, 1995.

\bibitem{sion1958general}
M.~Sion \emph{et~al.}, ``On general minimax theorems,'' \emph{Pacific J. Math},
  vol.~8, no.~1, pp. 171--176, 1958.

\bibitem{stoican2015flat}
F.~Stoican, I.~Prodan, and D.~Popescu, ``Flat trajectory generation for
  way-points relaxations and obstacle avoidance,'' in \emph{Control and
  Automation (MED), 2015 23th Mediterranean Conference on}.\hskip 1em plus
  0.5em minus 0.4em\relax IEEE, 2015, pp. 695--700.

\bibitem{ferrez2001cuts}
J.~Ferrez, K.~Fukuda, and T.~M. Liebling, ``Cuts, zonotopes and arrangements,''
  \emph{The sharpest Cut. SIAM Series on Optimization}, 2001.

\bibitem{YALMIP}
\BIBentryALTinterwordspacing
J.~L\"{o}fberg, ``Yalmip : A toolbox for modeling and optimization in
  {MATLAB},'' in \emph{Proceedings of the CACSD Conference}, Taipei, Taiwan,
  2004. [Online]. Available: \url{http://users.isy.liu.se/johanl/yalmip}
\BIBentrySTDinterwordspacing

\bibitem{mpt}
\BIBentryALTinterwordspacing
M.~Kvasnica, P.~Grieder, and M.~Baoti\'{c}, ``{Multi-Parametric Toolbox
  (MPT)},'' 2004. [Online]. Available: \url{http://control.ee.ethz.ch/~mpt/}
\BIBentrySTDinterwordspacing

\bibitem{ipopt2006}
A.~Wachter and L.~T. Biegler, ``On the implementation of a primal-dual interior
  point filter line search algorithm for large-scale nonlinear programming,''
  \emph{Mathematical Programming}, vol. 106, no.~1, pp. 25--57, 2006.

\end{thebibliography}
\end{document}